\newcommand\ind{\mbox{}\quad}
\newenvironment{isabelle} {%
  \begin{list}{}{\leftmargin2ex}
  \item\relax\(}{\)
  \end{list}
}%
\newcommand\isa[1]{\ensuremath{\texttt{#1}}}%
\newcommand\isakwd[1]{\textbf{#1}}%
\newcommand\isaIkwd[1]{\textbf{\textit{#1}}}%
\def\R{\mathbb{R}}%
\def\F{\mathbb{F}}%
\def\N{\mathbb{N}}%
\def\Q{\mathbb{Q}}%
\def\B{\mathbb{B}}%
\DeclareMathOperator*{\maximize}{maximise}
\DeclareMathOperator*{\minimize}{minimise}
\DeclareMathOperator*{\subjectto}{subject\ to:\ }
\newcommand{\defines}{:=}
\newcommand{\rllpoly}{\texttt{list\_to\_lpoly}}
\newcommand{\rvlpoly}{\texttt{vec\_to\_lpoly}}
\newcommand{\vflpoly}{\texttt{lpoly\_to\_vec}}
\newcommand{\myvec}{\texttt{vec}}
\newcommand{\dimpoly}{\texttt{dim\_poly}}
\newcommand{\coeff}{\texttt{coeff}}
\newcommand{\mtlpol}{\texttt{matrix\_to\_lpolies}}
\newcommand{\poltmt}{\texttt{lpolies\_to\_matrix}}
\newcommand{\twomatninter}{\texttt{two\_block\_non\_interfere}}
\newcommand{\xcgeqyb}{\texttt{xc\_geq\_yb}}
\newcommand{\figeq}{\texttt{from\_ind\_geq}}
\newcommand{\satprimal}{\texttt{sat\_primal}}
\newcommand{\satdual}{\texttt{sat\_dual}}
\newcommand{\optLP}{\texttt{optimal\_LP}}
\newcommand{\vappend}{\ensuremath{\mathrel{\texttt{@}_v}}}
\newcommand{\listAppend}{\ensuremath{\mathrel{\texttt{@}}}}
\newcommand{\zerovn}[1]{\ensuremath{\texttt{0}_v^{#1}}}
\newcommand{\dimvec}[1]{\ensuremath{\texttt{dim}\ #1}}
\newcommand{\dimrow}[1]{\ensuremath{\texttt{row}\ #1}}
\newcommand{\dimcol}[1]{\ensuremath{\texttt{col}\ #1}}
\newcommand{\typeType}[1]{\ensuremath{\texttt{#1}}}
\newcommand{\dotP}{\ensuremath{\bullet}}
\newcommand{\matVecMul}{\ensuremath{\cdot_v}}
\newcommand{\vecMatMul}{\ensuremath{\cdot^v}}
\newcommand{\abstractLP}{\texttt{abstract\_LP}}
\newcommand{\createOptSol}{\texttt{create\_optimal\_solution}}
\newcommand{\matgeqeq}{\texttt{mat\_leq\_eqc}}
\newcommand{\optimize}{\texttt{maximize}}
\newcommand{\transpose}[1]{\ensuremath{#1^{T}}}
\newcommand{\MAPsol}{\ensuremath{\texttt{SOL}}}
\newcommand{\MAPsolNTH}[1]{\ensuremath{\MAPsol_{[#1]}}}
\newcommand{\vecNTH}[2]{\ensuremath{#1\ \$\ #2}}
\newcommand{\singletonList}[1]{\ensuremath{[#1]}}
\newcommand{\simplex}{\texttt{simplex}}
\newcommand{\splitmn}{\texttt{split\_nm}}
\newcommand{\minLP}{\ensuremath{\texttt{min\_LP}}}
\newcommand{\maxLP}{\ensuremath{\texttt{max\_LP}}}
\newcommand{\unsat}{\ensuremath{\texttt{Unsat}}}
\newcommand{\sat}{\ensuremath{\texttt{Sat}}}
\newcommand{\listNTH}[2]{\ensuremath{#1!#2}}
\newcommand{\listFromToExcl}[2]{\ensuremath{[#1..{\setlength{\thickmuskip}{0mu}<}#2]}}
\newcommand{\listComprehension}[3]{\ensuremath{[\ #1\ .\ \ #2
    \leftarrow #3\ ]}}
\newcommand{\PwEq}{\ensuremath{\mathrel{=_{pw}}}}
\newcommand{\PwGeq}{\ensuremath{\mathrel{\geq_{pw}}}}
\newcommand{\PwLeq}{\ensuremath{\mathrel{\leq_{pw}}}}
\newcommand{\ConstructEq}{\ensuremath{\mathrel{[=]}}}
\newcommand{\ConstructLeq}{\ensuremath{\mathrel{[\leq]}}}
\newcommand{\ConstructGeq}{\ensuremath{\mathrel{[\geq]}}}
\newcommand{\sumType}{\ensuremath{\mathrel{\texttt{+}}}}
\newcommand{\some}{\ensuremath{\texttt{Some}}}
\newcommand{\none}{\ensuremath{\texttt{None}}}
\def\nats{\mathbb{N}}%
\lstdefinestyle{bash}{%
  basicstyle=\ttfamily\normalsize,%
  commentstyle=\ttfamily\normalsize,%
  numbers=none,%
  xleftmargin=15pt,%
  xrightmargin=15pt,%
  nolol=true,%
  language=bash,%
  showstringspaces=false%
}
\title{Linear Programming in Isabelle/HOL} 
\titlerunning{Linear Programming in Isabelle/HOL} 
\author{Julian Parsert}{University of Oxford, United Kingdom \and University of Innsbruck, Austria \and \url{http://www.parsert.com} }{julian.parsert@gmail.com}{https://orcid.org/0000-0002-5113-0767}{}
\authorrunning{J. Parsert} 
\keywords{Linear Programming, Optimisation, Interactive
    Theorem Proving, Isabelle/HOL.} 
\begin{document}

\maketitle

\begin{abstract}
  Linear programming describes the problem of optimising a linear
  objective function over a set of constraints on its variables. In
  this paper we present a solver for linear programs implemented in
  the proof assistant Isabelle/HOL. This allows formally proving its
  soundness, termination, and other properties. We base these results
  on a previous formalisation of the simplex algorithm which does not
  take optimisation problems into account. Using the weak duality
  theorem of linear programming we obtain an algorithm for solving
  linear programs. Using Isabelle's code generation mechanism we can
  generate an external solver for linear programs.
\end{abstract}

\section{Introduction}
Linear programming is a methodology for solving certain types of
optimisation problems. Linear programming as a part of operations
research also has applications in many areas outside of pure
mathematics and computer science. Examples of applications include but
are not limited to: finance, transportation, management, etc. In
computer science linear programming can be used in, for example,
network optimisation and integer transition systems. Finally, our
motivation for this work is its use in game theory. One can express a
two player zero-sum game using a linear program. Solving this linear
program is equivalent to solving the original game. Hence, linear
programming can be used to solve two-player zero sum
games~\cite{maschler:solan:zamir:2013,winston2004operations}.

Due to its large amount of use cases many software suites ship with a
solver for linear programs~\cite{glpk,GleixnerSteffyWolter2012},
including popular software like Microsoft Excel. However, software is
known to have bugs and undesirable behaviour and the aforementioned
tools most certainly are no exception. Therefore, we believe that
formal verification can be a useful tool to increase trust in the
results of linear program solvers, especially when they are applied to
fault critical areas. In this paper we discuss the use of a proof
assistant to formalise the notion of linear programs and an algorithm
for solving them. We use the proof assistant Isabelle/HOL. In
particular, we formalise an algorithm for solving
linear programs based on a previous formalisation of the general
simplex method~\cite{Spasic:FormIncrSimplex}. This algorithm is a reduction
that reduces the optimisation problem to a constraint satisfaction problem.
Our description of this reduction is stated in such a way that Isabelle's code generation
mechanism can be utilised to generate a formally verified Haskell
program which solves linear programs. To summarise, our contributions
are as follows:
\begin{itemize}
\item We formalise linear programs using the proof assistant
  Isabelle/HOL and derive results related to the duality of linear
  optimisation.
\item We describe an algorithm for solving linear programs and prove
  its soundness. This algorithm is stated in a way such that
  Isabelle's code generation mechanism can be used to obtain a
  verified executable program.
\item In order to obtain the aforementioned results, we provide
  translations and equivalences between an Isabelle library for linear
  polynomials and one for linear algebra. We also provide correctness
  results for these translations.
\item Using the generated program we solve an example game as
  a case study to show how games can be solved using linear
  programming.
\end{itemize}
%
\paragraph*{Related Work:} Our work is based on a formalisation of the
general simplex algorithm described
in~\cite{SimplexAFP,Spasic:FormIncrSimplex}. However, the general
simplex algorithm lacks the ability to optimise a function. Boulmé and
Maréchal~\cite{Sylvain:CoqTacForEqualityLinArith} describe a
formalisation and implementation of Coq tactics for linear integer
programming and linear arithmetic over rationals. More closely related
is the formalisation by Allamigeon et
al.~\cite{Allamigeon:FormCvxPolyhedraSimplex} which formalises the
simplex method and related results in Coq. As part of Flyspeck project Obua
and Nipkow~\cite{Obua2009} created a verification mechanism for linear
programs using the HOL computing library and external solvers.
\paragraph*{Outline:}
In \prettyref{subsec:isabelleNotation} we introduce the proof assistant
Isabelle/HOL as well as notation which will be used throughout this
paper. Subsequently, in \prettyref{sec:lp} we give a short overview of linear
programming. In \prettyref{sec:formalisation} we describe the formalisation and
provide more details on the definitions, algorithms and theorems. In
\prettyref{sec:codeGen} we discuss the generated algorithm and some
examples. Finally, in \prettyref{sec:conclusion} we make concluding remarks and
discuss future work.
\subsection{Isabelle/HOL and Notation}\label{subsec:isabelleNotation}
We use the proof assistant Isabelle/HOL, which is based on simply
typed higher-order logic. On top of the simple type system
Isabelle/HOL provides type classes. We will also use Isabelle's
standard \typeType{option}\ type with the constructors \some{} and
\none{} as well as the sum type denoted ``\sumType''. While the
constructors of the sum type are \typeType{Inl} and \typeType{Inr}, we
will use \typeType{Unsat} and \typeType{Sat} instead. We also use
$\N$ and $\Q$ to denote Isabelle's natural and rational numbers as
well as $\alpha\ \typeType{list}$ and
$(\alpha,\ \beta)\ \typeType{mapping}$ for polymorphic lists and
mappings from $\alpha$ to $\beta$. For a mapping $M$ we use the
notation $M_{[i]}$ to denote the value of $i$ in $M$. We use \dotP\ to
denote the dot product between a row vector and a column vector. The
symbols \matVecMul\ and \vecMatMul\ describe the vector-matrix and
matrix-vector multiplication respectively. The symbols $=$, $\leq$,
and $\geq$ retain their standard semantics for scalars, while we use
\PwEq, \PwLeq, and \PwGeq{} to denote the respective pointwise orders
on vectors. Importantly, we also use a \typeType{constraint} type
describing a constraint. The constructors of these are \ConstructEq,
\ConstructLeq, and \ConstructGeq. To see the difference, note that
while $x \leq y$ and $x \PwLeq A$ are of type \typeType{bool}, the
expression $x \ConstructLeq y$ is of type \typeType{constraint} which
is the pair $(x, y)$ in addition with one of the aforementioned
constructors. We use \isa{[m..<n]} and \isa{\{m..<n\}} to denote lists
and sets of elements from $m$ to $n-1$, while dropping the ``\isa{<}''
symbol also includes $n$. Furthermore, $[ f\ x\ .\ i \leftarrow L]$ is
a short notation for \isa{map\ f L}. The function \dimvec{c}, \dimrow{A}, and
\dimcol{A} return the dimension of a vector $c$ and the number of rows
and columns of a matrix $A$. The zero vector of dimension $n$ is
denoted \zerovn{n}. Vector and list concatenation are denoted with the
operators \vappend\ and \listAppend\ while \singletonList{a} is the
singleton list containing $a$. \listNTH{L}{i} and \vecNTH{V}{i} are
list and vector access operators respectively. Both are zero indexed
and are only well defined if $i< \isa{length}\ L$ or $i < \dimvec{V}$.
The code snippets presented in the remainder of the paper
have been formatted for readability omitting brackets, type
annotations, etc.
\section{Linear Programming}\label{sec:lp}
We will give a brief overview of linear programming. Parts of our
formalisation are based on the textbook ``Theory of Linear and Integer
Programming'' by Schrijver~\cite{schrijver1998theory} which we also
recommend for a more detailed presentation of the topic.

A linear program describes the problem where we have an
objective function $f(x_1,\dots,x_n)$ that we want to optimise while the variables
$x_1, \dots x_n$ are subject to a set of constraints. These
constraints can be an equality
\begin{gather}
  \alpha_1x_1 + \dots + \alpha_nx_n = b\label{constraint:equality}
\end{gather}
or a non-strict inequality
\begin{align}
  \alpha_1x_1 + \dots + \alpha_nx_n &\geq b\label{constraint:geq}\\
  \alpha_1x_1 + \dots + \alpha_nx_n &\leq b.\label{constraint:leq}
\end{align}
Note how \prettyref{constraint:equality} is equivalent to the
combination of the Constraints~\ref{constraint:geq}
and~\ref{constraint:leq}. Furthermore, \prettyref{constraint:geq} is
equivalent to $- (\alpha_1x_1 + \dots + \alpha_nx_n) \leq - b$. For
simplicity, we will only consider constraints of
type~\ref{constraint:leq} and~\ref{constraint:equality} the latter of
which we only keep for sake of readability.

Given a set of linear constraints one can pose the question of whether
or not a variable assignment exists that satisfies these constraints.
This decision problem does not take the optimisation of an objective
function into account. An algorithm for deciding this is the
\emph{general simplex} algorithm. In case of success, we can obtain an
arbitrary variable assignment that satisfies all constraints.
\subsection{Linear Optimisation}
After having introduced the general decision problem for the
satisfaction of a list of constraints, we will now introduce linear
programming. In linear programming we are not only interested in
finding an arbitrary satisfying assignment but an assignment which is
optimal with respect to a given (linear) objective function.

More precisely, a linear program is an objective function $f$ which is
subject to a list of constraints $C$:
\[
  f(x_1, \dots, x_n) \defines c_1 * x_1 + \dots + c_n * x_n
\]
\begin{align*}
  C =
  \begin{bmatrix}
    A_{11} * x_1 + \dots + A_{1n} * x_n\leq b_1\\
    A_{21} * x_1 + \dots + A_{2n} * x_n \leq b_2\\
    \vdots \quad \quad \vdots \quad \quad \vdots \\
    A_{m1} * x_1 + \dots + A_{mn} * x_n \leq b_m
  \end{bmatrix}
\end{align*}
This gives rise to a more concise notation for linear programs where
$c_1, \dots, c_n$, $x_1, \dots, x_n$, $b_1, \dots, b_n$ are vectors
and $A_{11}, \dots, A_{mn}$ is a matrix:
\begin{align}\label{lp:primal}
  \begin{split}
    \maximize&\ c \dotP x\\
    \subjectto& A \matVecMul x \PwLeq b
  \end{split}
\end{align}
Solving this linear program is searching for a satisfying assignment
of variables that is optimal with respect to the function $f$.
\emph{Optimal} in this instance means either minimal or maximal.
Hence, we are looking for an assignment for $x_1,\dots, x_n$ such that
it satisfies the constraints and $f(x_1,\dots,x_n)$ is maximal
(minimal) in the set of all $f(y_1,\dots,y_n)$ where $y_1,\dots,y_n$
also satisfy the constraints. A concrete example of a linear program
accompanied by a plot showing the objective function and constraints is
presented in~\prettyref{ex:LP}.
\begin{example}[Linear Program]\label{ex:LP}
  Take the linear program consisting of the following
  objective function
  \begin{align*}
    f(x,y) := 7x + y
  \end{align*}
  and the following set of constraints:
  \begin{equation*}
    2x + y \leq 5 \quad \quad \quad
    -x + 2 y \leq 2 \quad \quad \quad
    \frac{1}{2}x - \frac{1}{2}y \leq \frac{1}{2} \quad \quad \quad
    x + y \geq 1
  \end{equation*}
  Using basic transformations to transform the last inequality to one
  of the form of Inequality~\ref{constraint:leq}, we obtain the
  following matrix $A$, and vectors $b$ and $c$:
  \begin{align}
    A =
    \begin{bmatrix}
      2 & 1\\
      -1 & 2 \\
      \frac{1}{2} & -\frac{1}{2} \\
      -1 & -1 \\
    \end{bmatrix},\ \ b =
    \begin{bmatrix}
      5 \\ 2\\ \frac{1}{2} \\ -1
    \end{bmatrix},\ \ c =
    \begin{bmatrix}
      7 & 1\\
    \end{bmatrix}
    \label{constraints:example}
  \end{align}
  \begin{figure}
    \centering
    \begin{tikzpicture}[scale=1.1]
      \draw[gray!50, thin, step=0.5] (-1,-1.5) grid (7,5);%
      \draw[thick,->] (-1,0) -- (7,0) node[right] {$x$};%
      \draw[thick,->] (0,-1.5) -- (0,5) node[above] {$y$};%

      \foreach \x in {-1,...,5} \draw (\x,0.05) -- (\x,-0.05)
      node[below] {\tiny\x};%
      \foreach \y in {-1,...,4} \draw (-0.05,\y) -- (0.05,\y)
      node[right] {\tiny\y};%

      \fill[blue!50!cyan,opacity=0.3] (0,1) -- (8/5,9/5) -- (2,1) --
      (1,0) -- cycle;%

      \draw (-1,0.5) -- node[above,sloped] {\tiny$-x + 2y \leq 2$}
      (6,4);%
      \draw (-0.5, 1.5) -- node[below,sloped,xshift=1.2cm]
      {\tiny$x + y \geq 1$} (2,-1);%
      \draw (0, 5) -- node[below,sloped,xshift=-1cm]
      {\tiny$2x + y \leq 5$} (3,-1);%
      \draw (0,-1) -- node[below,sloped,xshift=0.5cm]
      {\tiny$0.5x - 0.5 y\leq 0.5$} (5,4);%
      \draw[red] (3/2,4.5) -- node[below,sloped,xshift=-2cm]
      {\tiny$7x + y = 15$} (16/7,-1);%
    \end{tikzpicture}
    \caption{A plot describing the optimisation problem presented in \prettyref{ex:LP}.}
    \label{fig:lp}
  \end{figure}
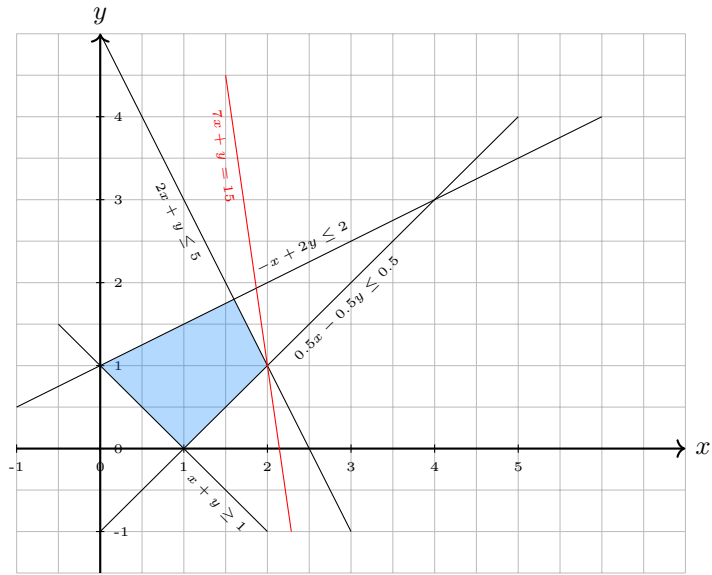
\end{example}
A plot of these inequalities is shown in \prettyref{fig:lp}. The
marked region (blue) is the region satisfying all constraints (i.e.
feasible). The red line (i.e. equation $7x + y = 15$) describes the
objective function going through the point $(2,1)$ which is the optimal
value within the feasible region and therefore the solution to the
linear program described with $A$, $b$, and $c$.
\subsection{Duality}
The duality principle plays an important role in optimisation
problems. It states that any optimisation problem, that is the
\emph{primal} problem, automatically defines a \emph{dual} problem.
Furthermore, a solution to the dual problem is bounded by the solution
of the primal problem. Our algorithm relies on this duality principle.

Since every linear program has a dual, we can write the dual of the
\prettyref{lp:primal}:
\begin{align}\label{lp:dual}
  \begin{split}
    \minimize&\ \transpose{b} \dotP \transpose{y}\\
    \subjectto& y \vecMatMul A \PwEq c,\ 0 \PwLeq y
  \end{split}
\end{align}
Note the switching of $b$ and $c$ in addition to the change from a
maximisation to a minimisation problem. If the original problem is a
maximisation problem, then the value (i.e. value of the optimised function at the
optimal point) of the dual problem is an upper
bound on the value of the primal problem. If the primal problem is a
minimisation, then the dual provides a lower bound. In linear
programming we know that these values are in fact equal. This is known as the Strong Duality Theorem
(\prettyref{thm:duality}).
\begin{theorem}[Strong Duality Theorem]\label{thm:duality}
  Given linear constraints $A$, $b$, and the objective function $c$,
  we obtain $x$ and $y$ as the solutions to the primal and dual linear program, respectively. We can derive the following equality:
  \[
    c \dotP x = y \dotP b
  \]
\end{theorem}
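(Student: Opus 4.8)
The plan is to establish the equality by proving the two inequalities $c \dotP x \leq y \dotP b$ and $c \dotP x \geq y \dotP b$ separately. The first is exactly weak duality and is essentially a one-line computation that uses only feasibility of the two points; the second is the substantive content, and it is where I expect to use a Farkas-type argument together with the optimality of $x$.

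First I would prove weak duality. Since $y$ is dual-feasible we have $y \vecMatMul A \PwEq c$ and $0 \PwLeq y$, and since $x$ is primal-feasible we have $A \matVecMul x \PwLeq b$. Combining these through the associativity of the bilinear forms gives
\[
  c \dotP x = (y \vecMatMul A) \dotP x = y \dotP (A \matVecMul x) \leq y \dotP b,
\]
where the final inequality holds because $y$ is nonnegative and $A \matVecMul x \PwLeq b$ holds pointwise. This direction should reduce to the pointwise-monotonicity lemmas for the dot product already available in the linear-algebra library, and it does not use optimality at all.

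The harder direction, $c \dotP x \geq y \dotP b$, is where optimality of $x$ must enter. Here I would argue by contradiction via Farkas' lemma. Let $z^{*} = c \dotP x$ be the optimal primal value. For any target $z > z^{*}$ the system $\{A \matVecMul u \PwLeq b,\ c \dotP u \geq z\}$ is infeasible, since $x$ is primal-optimal. The affine form of Farkas' lemma then yields multipliers $w \PwGeq 0$ and $\mu \geq 0$ with $w \vecMatMul A - \mu\, c = 0$ and $w \dotP b - \mu z < 0$. The case $\mu = 0$ is ruled out because it would give $w \PwGeq 0$ with $w \vecMatMul A = 0$ and $w \dotP b < 0$, which by Farkas contradicts feasibility of the primal at $x$. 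Hence $\mu > 0$, and rescaling $y' = w / \mu$ produces a dual-feasible point with $y' \dotP b < z$; letting $z \downarrow z^{*}$ gives a dual-feasible $y'$ with $y' \dotP b \leq z^{*}$, which by weak duality forces $y' \dotP b = z^{*}$. Since $y$ is dual-optimal (minimal), $y \dotP b \leq y' \dotP b = z^{*} = c \dotP x$, yielding the required inequality.

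The main obstacle will be Farkas' lemma itself. In a setting built on top of the simplex formalisation, the natural route is to derive Farkas from the soundness and completeness of the general simplex procedure: an \unsat{} answer for the combined constraint system must come with a nonnegative rational certificate, and this certificate is exactly the vector of multipliers $w, \mu$ above. Making the bookkeeping of this certificate precise --- in particular handling the extra row $c \dotP u \geq z$, the degenerate $\mu = 0$ case, and keeping the argument free of any hidden appeal to strong duality --- is the delicate part. An attractive alternative that avoids Farkas altogether is to fold the three conditions ($A \matVecMul x \PwLeq b$; $y \vecMatMul A \PwEq c$ with $0 \PwLeq y$; and $c \dotP x \geq y \dotP b$) into a single feasibility problem over the pair $(x,y)$: by weak duality the last constraint can only be met with equality, so any point the simplex returns for this combined system satisfies $c \dotP x = y \dotP b$ by construction, and the theorem then follows directly from the soundness of that feasibility check.
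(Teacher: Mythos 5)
First, a point of comparison that matters more than any individual step: the paper does not prove Theorem~\ref{thm:duality} at all. It is stated purely as background, and the text immediately afterwards says that only the \emph{weak} duality theorem (\prettyref{thm:weakDuality}) is needed and formalised; the conclusion likewise admits that even the completeness of the algorithm is only sketched. So you are supplying a proof where the paper deliberately supplies none. Your first half --- $c \dotP x = (y \vecMatMul A) \dotP x = y \dotP (A \matVecMul x) \leq y \dotP b$ using dual feasibility, $y \PwGeq 0$, and $A \matVecMul x \PwLeq b$ --- is exactly the paper's proof of \prettyref{thm:weakDuality}, so that part is faithful to what the formalisation actually contains.

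Your Farkas argument for the reverse inequality is the standard textbook route and is essentially sound, but two things need attention. The step ``letting $z \downarrow z^{*}$ gives a dual-feasible $y'$ with $y' \dotP b \leq z^{*}$'' does not follow from having, for each $z > z^{*}$, some $y'_z$ with $y'_z \dotP b < z$: that only shows the dual infimum equals $z^{*}$, and attainment of the infimum is itself a nontrivial fact about polyhedra. The detour is unnecessary, though: since the theorem hypothesis already hands you a dual \emph{optimal} $y$, you can conclude directly that $y \dotP b \leq y'_z \dotP b < z$ for every $z > z^{*}$, hence $y \dotP b \leq z^{*} = c \dotP x$, with no limit point needed. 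The more serious problem is your closing ``attractive alternative'': folding $A \matVecMul x \PwLeq b$, $y \vecMatMul A \PwEq c$, $y \PwGeq 0$, and $x \dotP c \geq y \dotP b$ into one feasibility problem and appealing to the soundness of the simplex check is circular as a proof of strong duality. Soundness only says that \emph{if} the solver returns an assignment, it satisfies $c \dotP x = y \dotP b$; to know the solver returns one you must know the combined system is feasible, and that feasibility is precisely strong duality (plus attainment). This is exactly the completeness gap the paper itself leaves open, so the alternative does not avoid Farkas --- it presupposes the theorem. Finally, be aware that Farkas' lemma is not available in the underlying formalisation, so your main route would require formalising it (e.g.\ from an \unsat{} certificate of the simplex procedure) before any of this could be carried out.
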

To prove the correctness of our algorithm we only require the weak
duality theorem, which is part of the formalisation and will be
discussed in \prettyref{thm:weakDuality}.
\subsection{Solving Linear Programs}\label{subsec:algorithmThy}
The most common algorithm to solve linear programs is the simplex
algorithm. Unlike the \emph{general simplex} algorithm this alternative simplex
algorithm takes an objective function into account.  However, we use a different
approach. Using the duality theorem we can solve linear programs with the
general simplex algorithm. This can be done by solving the constraints for the
primal program (cf. \prettyref{lp:primal}) and the dual program
(cf. \prettyref{lp:dual}) simultaneously:
\begin{gather}
  A \matVecMul x \PwLeq b,\ y \vecMatMul A \PwEq c,\ y \PwGeq
  0\label{syscons:solution0}
\end{gather}
Due to the dual program constituting an upper bound, we know any
satisfying assignment to $x$ and $y$ must satisfy
$x \dotP c \leq y \dotP b$. Now we can add the constraint
$x \dotP c \geq y \dotP b$ to the Constraints \ref{syscons:solution0}.
Hence, by solving the following constraint satisfaction problem
without explicitly maximising the objective function, we also solve the
linear program:
\begin{gather}
  x \dotP c \geq y \dotP b,\ A \matVecMul x \PwLeq b,\ y \vecMatMul A
  \PwEq c, y \PwGeq 0.\label{syscons:solution}
\end{gather}
Any resulting assignment satisfying the
Constraints~\ref{syscons:solution} also satisfies
$c \dotP x = y \dotP b$. Hence, $x$ solves
the \prettyref{lp:primal}. Furthermore, we also
derive that if a solution to the linear program exists this algorithm
will find it, since no sub-optimal solutions get lost by adding the constraint
$x \dotP c \geq y \dotP b$.
\section{Formalisation}\label{sec:formalisation}
We base our work on two previous formalisation's which are part of the
archive of formal proofs (AFP). The first is due to Spasi{\'{c}} et
al.~\cite{Spasic:FormIncrSimplex,SimplexAFP} who formalise the simplex
algorithm used for checking the satisfiability of linear constraints.
The second one is due to Thiemann et
al.~\cite{DBLP:conf/cpp/Thiemann016,JordanNormalFormAFP} and is a
linear algebra library which allows us to create a relation between
linear polynomials and matrices. All definitions and results in this
section can be found in the formalisation unless specified otherwise.

The formalisation described in~\cite{SimplexAFP} provides a
\emph{sound} and \emph{complete} implementation of the general simplex
algorithm called \simplex{} in Isabelle/HOL. The function \simplex{}
has type
\[
  \typeType{constraint\ list} \Rightarrow \N\ \typeType{list} \sumType
  (\N,\ \Q)\ \typeType{mapping}.
\]
It produces either a variable assignment
$(\N,\ \Q)\ \typeType{mapping}$ (variables are modelled as naturals)
satisfying the constraints or an unsatisfiable core
($\N\ \typeType{list}$). At the current point in time the
\simplex{} algorithm only works on rational numbers. However, due to a
change of the underlying libraries we will also be able to provide
results for real numbers. From now on, we will simply use \simplex{}
as a subroutine without further consideration. For further details on
this formalisation we refer to Marić et al.'s work~\cite{SimplexAFP}.
\subsection{Combining Representations}
When formalising mathematics it is not uncommon to develop theories
that combine existing definitions and representations. In particular,
it is essential to develop methodologies that allow for switching
between representations as some lend themselves better for certain
tasks~\cite{Gonthier:OddOrder}. In our case we use two different
representations for (lists of) linear polynomials.

The first representation is that used by \simplex{}
in~\cite{SimplexAFP}. Here, linear polynomials are defined as
functions mapping variables to their coefficients. As variables are
modelled with natural numbers, polynomials are functions of type
$\N \Rightarrow \Q$ such that for each polynomial $p$ the set $\{x\in \nats .\
p\ x \neq 0 \}$ is finite. The second
representation is one using vectors and matrices. In particular, we
use the linear algebra library described
in~\cite{DBLP:conf/cpp/Thiemann016}. Our motivation for combining
these representations is that the vectors and
matrices make stating and proving some properties easier.

First, we create a mechanism to transform vectors to function type
polynomials. To this end we define a function \rllpoly{} that
translates a list to a polynomial.
\begin{isabelle}
  \isakwd{fun}\ \text{{\normalfont \rllpoly}}\ \isakwd{where}\\
  \ind \text{{\normalfont \rllpoly}}\ cs = \\
  \ind \ind \texttt{sum\_list}\ (\texttt{map2}\ (\lambda i\ c.\
  \texttt{monom}\ c\ i)\ [0..<\texttt{length}\ cs]\ cs)
\end{isabelle}
This function first creates a list of monomials where the index $i$ is
the vector and $\listNTH{cs}{i}$ is the coefficient of variable $i$.
Subsequently, we simply sum this list to obtain the function type
polynomial. Now we get a function that creates linear polynomials from
vectors:
\[
  \rvlpoly\ v = \rllpoly\ (\texttt{list\_of\_vec}\ v)
\]

Going the other direction is a little bit more difficult. First, we
define the dimension of a function-type polynomial $p$ to be $0$ if it
is the zero polynomial and $n$ if $p\ (n-1) \neq 0$ and
$\forall i \geq n.\ p\ i = 0$. Using the vector constructor \isa{vec}
we define a function which transforms a linear polynomial into a
vector;
\[
  \vflpoly\ p = \myvec\ (\dimpoly\ p)\ (\coeff\ p)
\]
The curried function $\isa{coeff}\ p$ is a function that given
$i \in \N$ returns the coefficient of $i$ in the polynomial $p$.

The most important result of combining these representations of
polynomials is \prettyref{thm:inverseRep}.
\begin{theorem}[\rvlpoly{} and \vflpoly{} are (almost)
  inverses]\label{thm:inverseRep}
  For any arbitrary linear polynomial $p$ the equation
  \begin{gather*}
    (\textnormal{\rvlpoly}\ (\textnormal{\vflpoly}\ p)) = p
  \end{gather*}
  holds. Since we lose information about the dimension of the
  original vector $v$ if $v$ ends in a sequence of zeroes we can only show
  the following two results:
  \begin{gather*}
    \vecNTH{\textnormal{\vflpoly}\ (\textnormal{\rvlpoly}\ v)}{i}=
    \vecNTH{v}{i}\\
    \textnormal{\dimvec{(\vflpoly\ (\rvlpoly\ v)))}} \leq
    \textnormal{\dimvec{v}}.
  \end{gather*}
\end{theorem}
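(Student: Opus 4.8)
The plan is to reduce all three claims to a single \emph{coefficient lemma} describing what \rllpoly{} does, namely
\[
  \coeff\ (\rllpoly\ cs)\ j = (\text{if }j < \texttt{length}\ cs\text{ then }\listNTH{cs}{j}\text{ else }0).
\]
I would prove this first, because the polynomials here are determined by their coefficient functions. Unfolding the definition, $\rllpoly\ cs$ is a \texttt{sum\_list} of monomials $\texttt{monom}\ (\listNTH{cs}{k})\ k$ for $k < \texttt{length}\ cs$. Using that coefficient extraction is additive (so $\coeff$ distributes over \texttt{sum\_list}) together with $\coeff\ (\texttt{monom}\ c\ i)\ j = c$ when $i = j$ and $0$ otherwise, the sum collapses to the single surviving term, giving the stated value. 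The only fiddly part is bookkeeping the indices produced by \texttt{map2} over $[0..<\texttt{length}\ cs]$, which I would discharge by induction on $cs$ or via an existing ``coefficient of a sum of monomials'' lemma in the polynomial library.

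For the first equation $(\rvlpoly\ (\vflpoly\ p)) = p$ I would invoke \emph{extensionality}: a polynomial is determined by its coefficient function, so it suffices to compare coefficients at every index. Since $\vflpoly\ p = \myvec\ (\dimpoly\ p)\ (\coeff\ p)$, its \texttt{list\_of\_vec} has length $\dimpoly\ p$ and $i$-th entry $\coeff\ p\ i$, so the coefficient lemma yields
\[
  \coeff\ (\rvlpoly\ (\vflpoly\ p))\ i = (\text{if }i < \dimpoly\ p\text{ then }\coeff\ p\ i\text{ else }0).
\]
A case split on $i$ then finishes it: for $i < \dimpoly\ p$ this is literally $\coeff\ p\ i$, and for $i \geq \dimpoly\ p$ it equals $0 = \coeff\ p\ i$ by the defining property of \dimpoly{} (every coefficient at index $\geq \dimpoly\ p$ vanishes). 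The zero-polynomial case $\dimpoly\ p = 0$ is handled uniformly by this split, since then all coefficients are $0$.

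The remaining two claims concern the other composition, whose only defect is the loss of trailing zeros. From the coefficient lemma applied to $v$ we get $\coeff\ (\rvlpoly\ v)\ j = \vecNTH{v}{j}$ for $j < \dimvec{v}$, and $\coeff\ (\rvlpoly\ v)\ j = 0$ for $j \geq \dimvec{v}$. Because $\dimvec{(\vflpoly\ q)} = \dimpoly\ q$ by definition of \vflpoly{}, the dimension bound $\dimvec{(\vflpoly\ (\rvlpoly\ v))} \leq \dimvec{v}$ reduces to $\dimpoly\ (\rvlpoly\ v) \leq \dimvec{v}$, which follows from the general fact that $\dimpoly\ q \leq n$ whenever all coefficients of $q$ at indices $\geq n$ are $0$, instantiated at $n = \dimvec{v}$. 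For the pointwise equality, any valid index satisfies $i < \dimvec{(\vflpoly\ (\rvlpoly\ v))} = \dimpoly\ (\rvlpoly\ v) \leq \dimvec{v}$; unfolding vector access through \vflpoly{} gives $\vecNTH{(\vflpoly\ (\rvlpoly\ v))}{i} = \coeff\ (\rvlpoly\ v)\ i$, and the coefficient lemma equates this with $\vecNTH{v}{i}$.

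I expect the main obstacle to be \dimpoly{} rather than the algebra. One must derive and apply the monotonicity-type lemma ``all coefficients of $q$ beyond index $n$ vanish $\Rightarrow \dimpoly\ q \leq n$'' and correctly cover the zero-polynomial boundary, since \dimpoly{} is defined by a last-nonzero-index condition rather than structurally. Everything else is a routine consequence of the coefficient lemma and extensionality.
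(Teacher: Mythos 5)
Your proof is correct; the paper itself prints no proof of this theorem (it defers to the Isabelle formalisation), and your reduction to the coefficient lemma for \rllpoly{} followed by coefficient extensionality, with the \dimpoly{} monotonicity fact handling the second composition, is exactly the natural route such a formalisation takes. The one point worth making explicit is that the pointwise claim is only meaningful for $i$ below the dimension of the shorter vector $\vflpoly\ (\rvlpoly\ v)$, since out-of-range vector access is unspecified in the underlying library --- which is precisely the reading you adopt.
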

Building on these definitions we also define the functions \mtlpol{}
and \poltmt{} which translate a matrix to a list of linear polynomials
and vice versa. Having these two ways of representing linear
polynomials we now use the vector/matrix representation for the remainder of the
paper.
\subsection{Creating Systems of Constraints}
For the algorithm, we need to be able to create and solve a system of
constraints as described in the Constraints as displayed in~\eqref{syscons:solution}.
Since we use the \simplex{} subroutine as a solver, we only need to
worry about creating the system of constraints. The
Constraints in~\eqref{syscons:solution} describe two different vectors $x$
and $y$ with different constraints and a single intersection at the
Constraint $x \dotP c \geq y \dotP b$. Since \simplex{} only allows
for the creation of a single solution of type
$(\N,\ \Q)\ \typeType{mapping}$, we need to synthesise the vectors $x$
and $y$ in certain positions in this mapping. Hence, we introduce
several definitions that allow for the creation of such constraints.

We know that the vector $x$ has to be of length \dimvec{c}\ and the
vector $y$ of length \dimvec{b}. Assuming that the simplex subroutine
terminates successfully with a resulting mapping \MAPsol\ as a SOLution, we create
constraints such that the first \dimvec{c}\ elements in \MAPsol\
constitute the vector $x$ and the elements \MAPsolNTH{\dimvec{c}} to
\MAPsolNTH{\dimvec{c}+\dimvec{b} - 1} the vector $y$.

First, we encode the constraint $y \PwGeq 0$. To keep this as modular
as possible we introduce the function \figeq.
\begin{isabelle}\label{isadef:geq0}
  \isakwd{fun}\ \text{{\normalfont \figeq}}\ :: \N \Rightarrow vector
  \Rightarrow \texttt{constraint list}\ \isakwd{where}\\
  \ind \text{{\normalfont \figeq}}\ ix\ v = [p_{i+ix} \geq v_i.\ i \in
  [0..<\dimvec{v}]]
\end{isabelle}
This allows us to specify a starting index $i$ and a vector $v$, such
that for all $j<\dimvec{v}$, $\MAPsolNTH{i+j} \geq v_i$. Therefore,
given that we synthesise $y$ in the second part of \MAPsol{}\ the
constraint $y \PwGeq 0$ can be expressed as the following:
\begin{gather*}
  \figeq{}\ (\dimvec{c})\ \zerovn{\dimvec{b}}
\end{gather*}

Next, we tackle the two sets of constraints $A \matVecMul x \PwLeq b$
and $y \vecMatMul A \PwEq c$. We will leverage the fact that
$y \vecMatMul A = \transpose{A} \matVecMul \transpose{y}$ in order to
better represent the latter constraint. Since the two constraints are
independent of each other, that is $x$ and $y$ do not interfere, we
first introduce a way of stating them simultaneously. For that we
introduce~\prettyref{def:twoBlockNonInter}. Note that this is a special case of
a block diagonal matrix.
\begin{definition}[Two block non
  interference]\label{def:twoBlockNonInter}
  Given two matrices $A^{m \times n}$ and $B^{a \times b}$ we define a matrix
  \textnormal{\twomatninter},
  \begin{gather*}
    \textnormal{\twomatninter}\ A\ B=
    \begin{bmatrix}
      A_{11}       & \dots & A_{1n}   &   &  &  \\
      \vdots      & \ddots & \vdots  &  & {0} &  \\
      A_{m1}       & \dots & A_{mn}   &  &  &  \\
      &      &         & B_{11} & \dots & B_{1b} \\
      & 0     &         &  \vdots & \ddots & \vdots \\
      &      &         &  B_{a1} & \dots & B_{ab} \\
    \end{bmatrix}
  \end{gather*}
\end{definition}
Using this definition we can show that matrix/vector multiplication of the first
\dimcol{A} elements with $A$ is independent from the multiplication of the last
\dimcol{B} elements with $B$. This notion is captured with
\prettyref{thm:matindep}.
\begin{theorem}\label{thm:matindep}
  Given matrices $A^{m \times n}$ and $B^{a \times b}$ and let $x$ and $y$ be
  $n$ and $b$ dimensional vectors respectively. Then we can show:
  \begin{gather*}
    \textnormal{\twomatninter}\ A\ B \matVecMul (x\vappend y) = (A \matVecMul x)
    \vappend (B \matVecMul y)
  \end{gather*}
\end{theorem}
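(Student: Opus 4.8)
The plan is to prove the vector equality by the standard route of showing equal dimensions and componentwise agreement. The block matrix \twomatninter\ $A$ $B$ has $m + a$ rows and $n + b$ columns, while $x \vappend y$ has dimension $n + b$; hence the left-hand side is a vector of dimension $m + a$. The right-hand side concatenates the $m$-dimensional vector $A \matVecMul x$ with the $a$-dimensional vector $B \matVecMul y$, so it also has dimension $m + a$. With the dimensions settled, it remains to prove that the two vectors agree at every index $i < m + a$.

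For the componentwise goal I would unfold the definition of matrix-vector multiplication, under which the $i$-th entry is the dot product of the $i$-th row of the matrix with the argument vector, i.e.
\[
  \vecNTH{(\textnormal{\twomatninter}\ A\ B) \matVecMul (x \vappend y)}{i}
  = \sum_{j < n+b} (\textnormal{\twomatninter}\ A\ B)_{ij} \cdot \vecNTH{(x \vappend y)}{j}.
\]
I would then split the summation range $j < n+b$ into $j < n$ and $n \leq j < n+b$ and perform a case analysis on the index $i$. The crux is that the defining block structure of \twomatninter\ forces one of the two partial sums to vanish: for $i < m$ the entries $(\textnormal{\twomatninter}\ A\ B)_{ij}$ are zero whenever $j \geq n$, and for $i \geq m$ they are zero whenever $j < n$.

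Concretely, when $i < m$ only the range $j < n$ survives, there $(\textnormal{\twomatninter}\ A\ B)_{ij} = A_{ij}$ and $\vecNTH{(x \vappend y)}{j} = \vecNTH{x}{j}$, so the sum collapses to $\vecNTH{A \matVecMul x}{i}$. When $m \leq i < m+a$ only the range $n \leq j < n+b$ survives; reindexing with $k = j - n$ gives entries $B_{(i-m)\,k}$ and, by the access rule for concatenated vectors, $\vecNTH{(x \vappend y)}{j} = \vecNTH{y}{k}$, so the sum collapses to $\vecNTH{B \matVecMul y}{i-m}$. Matching these against the definition of $\vappend$, for which $\vecNTH{(u \vappend w)}{i} = \vecNTH{u}{i}$ when $i < \dimvec{u}$ and $\vecNTH{w}{i - \dimvec{u}}$ otherwise, discharges both cases and hence the whole equality.

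I expect the main obstacle to be purely technical bookkeeping rather than conceptual difficulty: splitting the sum cleanly at the block boundary and discharging the two zero-block contributions will require the appropriate summation lemmas from the linear algebra library, and aligning the index shifts $i - m$ and $j - n$ in the lower block with the concatenation access rule is the most error-prone part. Once the entries of \twomatninter\ are unfolded according to \prettyref{def:twoBlockNonInter}, though, everything should reduce to routine index arithmetic.
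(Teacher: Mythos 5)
Your proof is correct and is exactly the argument one would expect: the paper itself does not print a proof of \prettyref{thm:matindep} (it is deferred to the Isabelle formalisation), and the natural formal proof is precisely your two-step scheme of checking dimensions and then establishing componentwise equality by splitting the defining sum at the block boundary, where the off-diagonal zero blocks of \twomatninter{} kill the cross terms and the index shifts $i-m$, $j-n$ align with the access rule for \vappend. Your closing remark is also apt --- in the formalisation the only real work is the summation-splitting and reindexing bookkeeping, not any conceptual step.
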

In order to state the constraints $A \matVecMul x \PwLeq b$ and
$\transpose{A} \vecMatMul \transpose{y} \PwEq c$ simultaneously we create the
following matrix:
\[
  \twomatninter\ A\ \transpose{A}=
  \begin{bmatrix}
    A_{11}       & \dots & A_{1n}   &   &  &  \\
    \vdots      & \ddots & \vdots  &  & {0} &  \\
    A_{m1}       & \dots & A_{mn}   &  &  &  \\
    &      &         & A_{11} & \dots & A_{m1} \\
    & 0     &         &  \vdots & \ddots & \vdots \\
    &      &         &  A_{1n} & \dots & A_{mn} \\
  \end{bmatrix}
\]
Using \mtlpol{} we can translate this matrix to a list of polynomials $L$. Then,
the list of constraints $A \matVecMul x \PwLeq b$ can be generated with:
\begin{gather*}
  \listComprehension{\listNTH{L}{i} \ConstructLeq
    \vecNTH{b}{i}}{i}{\listFromToExcl{0}{\dimvec{b}}}
\end{gather*}
The second list of constraints $\transpose{A} \vecMatMul \transpose{y} \PwEq c$
is:
\begin{gather*}
  \listComprehension{\listNTH{L}{i} \ConstructEq \vecNTH{(b \vappend
      c)}{i}}{i}{\listFromToExcl{\dimvec{b}}{\dimvec{b} + \dimvec{c}}}
\end{gather*}
Combining these definitions we obtain the following Isabelle function which
given a matrix $A$, and vectors $b$, $c$ generates a list of constraints
modelling $A \matVecMul x \PwLeq b$ and
$\transpose{A} \vecMatMul \transpose{y} \PwEq c$.
\begin{isabelle}
  \isakwd{fun}\ \text{{\normalfont \matgeqeq}}\ \isakwd{where}\\
  \ind \text{{\normalfont \matgeqeq}}\ A\ b\ c =\\
  \ind \ind \isaIkwd{let}\ lst = \\
  \ind \ind \ind \mtlpol\ (\twomatninter\ A\ \transpose{A})\\
  \ind \ind \isaIkwd{in}\\
  \ind \ind \listComprehension{\listNTH{lst}{i} \ConstructLeq
    \vecNTH{b}{i}}{i}{\listFromToExcl{0}{\dimvec{b}}} \listAppend\\
  \ind \ind \listComprehension{\listNTH{lst}{i} \ConstructEq \vecNTH{(b \vappend
      c)}{i}}{i}{\listFromToExcl{\dimvec{b}}{\dimvec{b} + \dimvec{c}}}
\end{isabelle}
Due to the use of \twomatninter{} and by \prettyref{thm:matindep} the vectors
$x$ and $y$ are generated in the correct positions.

Finally, we are left with the only constraint where $x$ and $y$ interfere,
$x \dotP c \geq y \dotP b$.
\begin{isabelle}
  \isakwd{fun}\ \text{{\normalfont \xcgeqyb}}\ \isakwd{where}\\
  \ind \text{{\normalfont \xcgeqyb}}\ c\ \ b = \\
  \ind \ind \rvlpoly\ (c\ \vappend \zerovn{\dimvec{b}}) \ConstructGeq \rvlpoly\
  (\zerovn{\dimvec{c}} \vappend b)
\end{isabelle}
This constraint ensures that after extracting $x$ and $y$ from the solution
mapping the following condition holds ($n = \dimvec{c}$ and $m = \dimvec{b}$):
\begin{gather*} [c_0, \dots c_{n-1}, 0_0, \dots,0_{m-1}] \dotP (x \vappend y)
  \geq [0_0, \dots, 0_{n-1}, b_0, \dots, b_{m-1}] \dotP (x \vappend y)
\end{gather*}
This precisely corresponds to the constraint $x \dotP c \geq y \dotP b$. Now we
have defined all the functions necessary to generate the
\prettyref{syscons:solution} by concatenating the lists:
\begin{gather}\label{constraints:system} \singletonList{\xcgeqyb\ c\
    b}\listAppend (\matgeqeq\ A\ b\ c) \listAppend (\figeq{}\ (\dimvec{c})\
  \zerovn{\dimvec{b}})
\end{gather}
Solving this list of constraints with the \simplex\ procedure yields a mapping
\MAPsol\ of $\N$ to $\Q$. Using a simple split function
$\splitmn\ (\dimvec{c})\ (\dimvec{b})\ \MAPsol$ we obtain the pair of vectors
$(x, y)$ which satisfy the Constraints~\ref{constraints:system} and in
turn~\ref{syscons:solution}.
\subsection{Abstract Linear Programming}
Having described a way of expressing the constraints in our setting, we now take
a look at linear programming from an abstract point of view. That is, we will
define necessary definitions and derive results that we use to formally prove
the correctness of our algorithm.

First, we define the abstract notions of satisfying assignments for the primal
and dual problems.
\begin{isabelle}
  \isakwd{definition}\ \text{{\normalfont \satprimal}}\ A\ b =
  \{x.\ A \matVecMul x \PwLeq b \}\\
  \isakwd{definition}\ \text{{\normalfont \satdual}}\ A\ c = \{y.\ y \vecMatMul
  A \PwEq c \land y \PwGeq 0 \}
\end{isabelle}
In addition we define the notion of optimality.
\begin{isabelle}
  \isakwd{definition}\ \text{{\normalfont \optLP}}\ f\ S\ c = \{x \in S.\
  (\forall y \in S.\ f\ (y \dotP c)\ (x \dotP c))\}
\end{isabelle}
Here $f$ is a function of type $\alpha \Rightarrow \alpha \Rightarrow \B$ which
usually defines an order, $S$ is a set of polynomials to optimise over, and $c$
is the objective function represented as a vector. Combining these we get the
maximisation problem \maxLP
\begin{isabelle}
  \optLP\ (\leq)\ (\satprimal\ A\ b)\ c
\end{isabelle}
and its dual minimisation problem \minLP\
\begin{isabelle}
  \optLP\ (\geq)\ (\satdual\ A\ c)\ b.
\end{isabelle}

Next we want to prove the weak duality theorem for \maxLP\ and \minLP.  To this
end we first create an abstract environment using Isabelle's \emph{locale}
mechanism~\cite{Ballarin2014} which also allows us to state assumptions:
\begin{isabelle}\label{locale:lp}
  \isakwd{locale}\ \text{{\normalfont \abstractLP}} =\\
  \ind \isakwd{fixes}\ A\ b\ c\\
  \ind \isakwd{assumes}\ A \in \F^{m \times n}\ \isakwd{and}\ b \in \F^{m}\
  \isakwd{and}\ c \in \F^{n}
\end{isabelle}
Note that since we are only conducting abstract reasoning without a concrete
algorithm yet, we do not restrict ourselves to $\Q$ or $\R$.  Furthermore, the
underlying type-class of $\F$ is a linearly ordered commutative semiring. Within
this environment we can prove \prettyref{lem:weakDualtyAux} and
\prettyref{thm:weakDuality}. The proof of the former can be found in the
formalisation under the name \texttt{weak_duality_aux}.
\begin{lemma}\label{lem:weakDualtyAux}
  If $x$ and $y$ are solutions to the primal and dual problem respectively, then
  \[
    c \bullet x \leq b \bullet y.
  \]
\end{lemma}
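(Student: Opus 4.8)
The plan is to establish the inequality by the standard chain of (in)equalities that connects the primal and dual objective values through the bilinear identity relating $\vecMatMul$ and $\matVecMul$. First I would unfold the hypotheses inside the \abstractLP\ locale: that $x$ solves the primal means $x \in \satprimal\ A\ b$, i.e. $A \matVecMul x \PwLeq b$, and that $y$ solves the dual means $y \in \satdual\ A\ c$, i.e. $y \vecMatMul A \PwEq c$ together with $y \PwGeq 0$. Since $\PwEq$ on vectors is componentwise equality and hence genuine vector equality, the dual feasibility constraint lets me rewrite $c$ as $y \vecMatMul A$, reducing the goal to a statement about $A$, $x$, and $y$ alone.

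The heart of the argument is the reassociation identity $(y \vecMatMul A) \dotP x = y \dotP (A \matVecMul x)$, which holds because both sides expand to $\sum_{i,j} y_i A_{ij} x_j$. I expect this to be available, possibly after minor rewriting, from the Thiemann et al.\ linear algebra library as a lemma about a vector--matrix--vector product; applying it requires discharging the dimension side-conditions $y \in \F^{m}$, $A \in \F^{m \times n}$, and $x \in \F^{n}$, all of which come directly from the locale assumptions.

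With that identity in hand, the remaining step is monotonicity: from $A \matVecMul x \PwLeq b$ and $y \PwGeq 0$ I would conclude $y \dotP (A \matVecMul x) \leq y \dotP b$. Componentwise this is $y_i \cdot (A \matVecMul x)_i \leq y_i \cdot b_i$, which follows from the ordered-semiring fact that multiplication by a nonnegative element preserves order; summing over the $m$ rows gives the inequality on the dot products. Finally, commutativity of the dot product over the commutative semiring $\F$ yields $y \dotP b = b \dotP y$, closing the chain $c \dotP x = (y \vecMatMul A) \dotP x = y \dotP (A \matVecMul x) \leq y \dotP b = b \dotP y$.

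The main obstacle I anticipate is not mathematical depth but bookkeeping: coaxing the vector--matrix--vector reassociation lemma into exactly the form the library exposes, threading every dimension hypothesis from \abstractLP\ through each rewrite, and invoking the monotonicity and commutativity lemmas at the right level of generality, namely for the linearly ordered commutative semiring $\F$ rather than for a field or for $\R$ or $\Q$ specifically.
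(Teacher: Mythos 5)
Your proposal is correct and follows essentially the same chain as the paper's argument (given explicitly for \prettyref{thm:weakDuality}): rewrite $c$ via dual feasibility, reassociate to $y \dotP (A \matVecMul x)$, and bound by $y \dotP b$ using $A \matVecMul x \PwLeq b$ and $y \PwGeq 0$, with dimension side-conditions supplied by the locale. Your added remarks about discharging the bookkeeping over the linearly ordered commutative semiring $\F$ match the paper's closing note about the locale assumptions.
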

\begin{theorem}[Weak Duality Theorem]\label{thm:weakDuality}
  If $x \in \maxLP$ and $y \in \minLP$ then we can show
  \[
    x \dotP c \leq y \dotP b.
  \]
\end{theorem}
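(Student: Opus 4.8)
The plan is to reduce the statement to \prettyref{lem:weakDualtyAux}, which already carries the genuine quantitative content. The key observation is that optimality is strictly stronger than feasibility: by the definition of \optLP, the set $\optLP\ f\ S\ c$ is carved out of $S$ by an additional universally quantified condition, so $\optLP\ f\ S\ c \subseteq S$ for any $f$ and $c$. Hence from $x \in \maxLP$ I would first extract $x \in \satprimal\ A\ b$, i.e.\ $A \matVecMul x \PwLeq b$, and from $y \in \minLP$ I would extract $y \in \satdual\ A\ c$, i.e.\ $y \vecMatMul A \PwEq c$ together with $y \PwGeq 0$. In Isabelle this amounts to unfolding \optLP\ and projecting onto the set-membership conjunct, discarding the optimality quantifier that we do not need here.

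With these feasibility facts in hand, the second step is to invoke \prettyref{lem:weakDualtyAux} directly on $x$ and $y$, which yields $c \dotP x \leq b \dotP y$. The final step is then purely a matter of reconciling notation: the goal is stated as $x \dotP c \leq y \dotP b$, whereas the lemma delivers $c \dotP x \leq b \dotP y$. Since the ambient type class of $\F$ fixed in \prettyref{locale:lp} is a (linearly ordered) commutative semiring, the dot product $\dotP$ is symmetric, so $c \dotP x = x \dotP c$ and $b \dotP y = y \dotP b$; rewriting with commutativity of $\dotP$ closes the goal.

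I do not expect a real obstacle, precisely because all the arithmetic reasoning — chaining $A \matVecMul x \PwLeq b$, $y \PwGeq 0$, and $y \vecMatMul A \PwEq c$ into the inequality $c \dotP x \leq b \dotP y$ — has been discharged in the auxiliary lemma. The only points that need care are (i) performing the projection from optimality to feasibility for \emph{both} $x$ and $y$, and (ii) supplying the commutativity rewrite so that the orientation of the two dot products matches the theorem statement. Both are routine once the commutativity guaranteed by the locale assumptions on $\F$ is made available.
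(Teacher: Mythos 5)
Your proposal is correct and matches the paper's argument in substance: both extract feasibility ($x \in \satprimal\ A\ b$, $y \in \satdual\ A\ c$) from the optimality definitions, establish $c \dotP x \leq b \dotP y$, and finish by commutativity of \dotP{} under the locale assumptions. The only difference is presentational — you delegate the inequality chain to \prettyref{lem:weakDualtyAux}, whereas the paper's proof text inlines that same chain ($c \dotP x = y \vecMatMul A \matVecMul x \leq y \dotP b$) directly; your modular route is arguably closer to how the formalisation is structured.
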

\begin{proof}
  From the definition of \maxLP\ we know that $A \matVecMul x \PwLeq b$. And
  since we also know that $y \PwGeq 0$ (from \minLP), we can show
  $y \vecMatMul A \matVecMul x \leq y \dotP b$. Furthermore, from the definition
  of \minLP\ we have $y \vecMatMul A \PwEq c$ and hence
  $c \dotP x = y \vecMatMul A \matVecMul x$. Putting these together we get,
  $c \dotP x = y \vecMatMul A \matVecMul x \leq y \dotP b$. Note that the
  assumptions in the \emph{locale} guarantee that the dimensions of the
  vectors/matrices are correct and allow for the dot product to be commutative,
  thus proving $x \dotP c \leq y \dotP b$.
\end{proof}
\subsection{Final Algorithm}
With the necessary definitions and results we are now able to describe the
algorithm and prove its soundness. As explained above we will be using the
general simplex algorithm \simplex\ which has previously been formalised and
proven correct within Isabelle. With \simplex\ as a subroutine we write the
following function.

\begin{isabelle}
  \isakwd{fun}\ \text{{\normalfont \createOptSol}}\  \isakwd{where}\\
  \ind \createOptSol\ A\ b\ c =\\
  \ind \ind \isaIkwd{let}\ cs = \\
  \ind \ind \ind \singletonList{\xcgeqyb{}\ c\ b}\listAppend \\
  \ind \ind \ind \matgeqeq\ A\ b\ c \listAppend  \\
  \ind \ind \ind \figeq{}\ (\dimvec{c})\ \zerovn{\dimvec{b}}\\
  \ind \ind \isaIkwd{in}\\
  \ind \ind \isaIkwd{case}\ \texttt{simplex}\ cs\ \isaIkwd{of} \ind \\
  \ind \ind \ind |\ \unsat\ S \Rightarrow \unsat\ S \\
  \ind \ind \ind |\ \sat\ S \Rightarrow \sat\ (\splitmn\ (\dimvec c)\ (\dimvec
  b)\ S)
\end{isabelle}
We first create a list of constraints $cs$ which describes the
\prettyref{syscons:solution} (cf. Listing~\ref{constraints:system}).
Subsequently, we use the general simplex algorithm \simplex\ to obtain an
arbitrary variable assignment that satisfies the constraints $cs$.  If a
satisfying assignment exists, we split the resulting assignment and create a
pair of vectors $(x, y)$ where $x$ and $y$ satisfy
the~\prettyref{syscons:solution}. It is important to note that this algorithm
assumes the inputs $A$, $b$, and $c$ to be of the right dimensions (i.e. the
dimensions assumed in the locale \abstractLP).  Furthermore, since this is now a
concrete algorithm the elements of $A$, $b$, and $c$ are assumed to be rational
numbers again.

Finally, in order we get a general algorithm without having to rely on the
locale assumptions, we simply add a check for the dimensions of the input and if
this fails we return \none{} and otherwise the result of \createOptSol.
\begin{isabelle}
  \isakwd{fun}\ \text{{\normalfont \optimize}}\  \isakwd{where}\\
  \ind \optimize\ A\ b\ c =\\
  \ind \ind \isaIkwd{if}\ \dimvec{b} = \dimrow{A} \land \dimvec{c} = \dimrow{A}\ \isaIkwd{then}\\
  \ind \ind \ind \texttt{Some}\ (\createOptSol\ A\ b\ c) \\
  \ind \ind \isaIkwd{else}\ \\
  \ind \ind \ind \texttt{None}
\end{isabelle}
With this algorithm we prove the following theorem without any locale
assumptions, that is without any assumptions outside the one specifically
stated.
\begin{isabelle}
  \isakwd{theorem}\ \text{soundness}\\
  \ind \isakwd{assumes}\ \optimize\ A\ b\ c = \texttt{Some}\ (\texttt{Sat}\ (x, y)) \\
  \ind \isakwd{shows}\ x \in max\_lp\ A\ b\ c
\end{isabelle}
\begin{proof}
  Since the result is $\texttt{Some}\ (\texttt{Sat}\ (x, y))$ we know the
  dimensions of the input were correct. From the constraints created in
  \createOptSol{} we know that for $x$, the following constraints hold:
  \[
    A \matVecMul x \PwLeq b
  \]
  Therefore, by the definition of \satprimal{} we have $x \in \satprimal\ A\
  b$. Conversely, for $y$ we have:
  \[
    y \vecMatMul A \PwEq c\ \text{and}\ y \PwGeq 0
  \]
  Now, by definition of \satdual{} we have $y \in \satdual\ A\ c$.  From
  \prettyref{thm:weakDuality} we derive that no matter if $x$ or $y$ optimise
  their respective objective function, as long as $x \in \satprimal\ A\ b$ and
  $y \in \satdual\ A\ c$ the inequality $x \dotP c \leq y \dotP b$ must
  hold. Finally, due to $\xcgeqyb{}\ c\ b$ we also have that $x$ and $y$ must
  satisfy $x \dotP c \geq y \dotP b$. Combining these we get
  $x \dotP c = y \dotP b$. By \prettyref{lem:weakDualtyAux} all
  $v \in \satprimal\ A\ b$ must obey $v \dotP c \leq y \dotP b$, leading to
  $v \dotP c \leq x \dotP c$. Hence, x is optimal, that is
  $x \in max\_lp\ A\ b\ c$.
\end{proof}
\section{Code Generation and Examples}\label{sec:codeGen}
Using Isabelle's code generation mechanism~\cite{HaftmannCodeGeneration} we can
generate code for the \optimize{} function. Isabelle by default allows for the
generation of code in Haskell, SML, OCaml, and Scala. Using this generated code
we get the function \optimize{} in the objective language. Using Haskell, as an
example\footnote{Any of the aforementioned languages could be used.}, we can implement a simple parser to create a program that takes a
matrix and two vectors as input and calculates the solution to this linear
program using \optimize:\\

\begin{haskell}
  solveLP :: (String, String, String) -> Maybe ([Nat]+(Vec Rat))
  solveLP (a, b, c) = maximize matA vecB vecC where
    matrix = parseMatrix a
    mRows = (nat_of_int (maximum (map length matrix)))
    matA = mat_of_cols_list mRows matrix
    vecB = parseListToVec b
    vecC = parseListToVec c
\end{haskell}
\\

The compiled Haskell program can be used to solve \prettyref{ex:LP} with the
Constraints \ref{constraints:example} as input. The result is the vector
$[2\ 1]$. Hence $7*2+1*1 = 15$ is the maximum value. An example closer to our
intended application is \prettyref{ex:rps}.
\begin{example}[Solving Rock Paper Scissors]\label{ex:rps}
  The commonly known game of Rock-Paper-Scissors can be modelled with the
  following payoff matrix:
  \begin{center}
    \begin{tabularx}{0.5\textwidth}{ X | X | X | X |}

      &  Rock  & Paper & Scissors \\ \hline
      Rock     &  0     & -1     & 1       \\ \hline
      Paper    & 1     & 0     & -1       \\ \hline
      Scissors & -1    & 1    & 0        \\ \hline
    \end{tabularx}
  \end{center}
  This payoff matrix is to be interpreted as follows: If player one (rows) plays
  paper and player two (columns) plays rock then the payoff for player one has a
  payoff of $1$ meanwhile player two has the payoff of $-1$ (i.e. player one
  wins and player two loses). Since the sum of the payoffs always equals $0$,
  this is a zero sum game. Encoding the strategies rock, paper, and scissors, in
  $x_1$, $x_2$, and $x_3$ we can encode this game into the following linear
  program:
  \begin{align}
    \begin{split}
      \maximize&\ u \\
      \subjectto& u \leq - x_2 + x_3 \\
      & u \leq x_1 - x_3   \\
      & u \leq -x_1 + x_2 \\
      & x_1 + x_2 + x_3 = 1 \\
      & 0 \leq x_1,\ 0 \leq x_2,\ 0 \leq x_3
    \end{split}
  \end{align}
  The first three constraints encode the payoff matrix where $u$ is the payoff
  (i.e. utility). Hence, $u \leq - x_2 + x_3$ implies that the payoff $u$ cannot
  be higher than the sum of the utilities for playing rock combined
  ($0,\ -1,\ 1$). Similarly, we encode the other strategies. The last two lines
  of constraints ensure that the resulting assignment is a probability
  distribution over $\{x_1, x_2, x_3 \}$. After transforming these to matrix
  form we can use the extracted program to calculate the following vector:
  \[ [0, \frac{1}{3}, \frac{1}{3}, \frac{1}{3}]
  \]
  Meaning the expected payoff of playing the optimal strategy is $0$, and the
  optimal strategy is the mixed strategy of playing $x_1$, $x_2$, and $x_3$ with
  equal probability ($\frac{1}{3}$). Hence, playing rock, paper, or scissors
  each with a probability of $\frac{1}{3}$ is the optimal strategy of this game.
\end{example}

For small game theory examples such as \prettyref{ex:rps} the generated
algorithm performs well and produces a result instantly.  However, we did not
tamper with the underlying formalisation of the simplex
algorithm~\cite{Spasic:FormIncrSimplex} which famously has exponential worst
case complexity but behaves well most of the time.  Hence our algorithm exerts
the same asymptotic complexity as the underlying general simplex
algorithm. However, since we introduce different kinds of constraints the number
of constraints roughly doubles. Neither our reduction nor the underlying
algorithm was formalised with efficiency or competitiveness in constraint
solving in mind. Hence, we did not conduct any experiments comparing the
generated code with existing off the shelf constraint solvers. 
\section{Conclusion and Future Work}\label{sec:conclusion}
We presented the formalisation of an algorithm for solving linear programs. This
work is based on previous formalisation's of the general simplex algorithm as
well as a matrix library. The previous formalisation only considers the
satisfaction of linear constraints and does not allow for optimising an
objective function. We improved upon this by incorporating optimisation. Linear
programming (i.e. linear optimisation) has many applications in many different
fields. In our case the motivation is its use in game theory where linear
programming can be used to solve two player zero-sum games. In this paper we
present a formalisation of an algorithm that solves linear programs as well as
some results derived from it. Furthermore the algorithm is described in such a
way that Isabelle's code generation mechanism can be used to generate executable
code providing a verified solver for linear programs. The formalisation
is part of the Archive of Formal Proofs~\cite{Linear_Programming-AFP}.

Although the algorithm is formally proven to be sound within the proof
assistant, a completeness proof is sketched in this paper but does
not exist in a formalised manner, yet. We leave this for future work.
Furthermore, we plan on using this formalisation to produce a verified
solver for zero-sum two player games as part of a game
theory~\cite{jpck-itp18} formalisation effort.



\bibliography{biblio}

\begin{thebibliography}{10}

\bibitem{Allamigeon:FormCvxPolyhedraSimplex}
Xavier Allamigeon and Ricardo~D. Katz.
\newblock A formalization of convex polyhedra based on the simplex method.
\newblock In Mauricio Ayala-Rinc{\'o}n and C{\'e}sar~A. Mu{\~{n}}oz, editors, {\em Interactive Theorem Proving}, pages 28--45, Cham, 2017. Springer International Publishing.

\bibitem{Ballarin2014}
Clemens Ballarin.
\newblock Locales: A module system for mathematical theories.
\newblock {\em Journal of Automated Reasoning}, 52(2):123--153, Feb 2014.
\newblock \href {https://doi.org/10.1007/s10817-013-9284-7} {\path{doi:10.1007/s10817-013-9284-7}}.

\bibitem{Sylvain:CoqTacForEqualityLinArith}
Sylvain Boulm{\'e} and Alexandre Mar{\'e}chal.
\newblock A {C}oq tactic for equality learning in linear arithmetic.
\newblock In Jeremy Avigad and Assia Mahboubi, editors, {\em Interactive Theorem Proving}, pages 108--125, Cham, 2018. Springer International Publishing.

\bibitem{GleixnerSteffyWolter2012}
Ambros~M. Gleixner, Daniel Steffy, and Kati Wolter.
\newblock Improving the accuracy of linear programming solvers with iterative refinement.
\newblock Technical Report 12-19, ZIB, Takustr. 7, 14195 Berlin, 2012.

\bibitem{glpk}
{GNU Linear Programming Kit (GLPK)}.
\newblock \url{https://www.gnu.org/software/glpk/glpk.html}, 2019.
\newblock Online; accessed 2 Mai 2019.

\bibitem{Gonthier:OddOrder}
Georges Gonthier, Andrea Asperti, Jeremy Avigad, Yves Bertot, Cyril Cohen, Fran{\c{c}}ois Garillot, St{\'e}phane Le~Roux, Assia Mahboubi, Russell O'Connor, Sidi Ould~Biha, Ioana Pasca, Laurence Rideau, Alexey Solovyev, Enrico Tassi, and Laurent Th{\'e}ry.
\newblock A machine-checked proof of the odd order theorem.
\newblock In Sandrine Blazy, Christine Paulin-Mohring, and David Pichardie, editors, {\em Interactive Theorem Proving}, pages 163--179, Berlin, Heidelberg, 2013. Springer Berlin Heidelberg.

\bibitem{HaftmannCodeGeneration}
Florian Haftmann and Tobias Nipkow.
\newblock Code generation via higher-order rewrite systems.
\newblock In Matthias Blume, Naoki Kobayashi, and Germ{\'a}n Vidal, editors, {\em Functional and Logic Programming}, pages 103--117, Berlin, Heidelberg, 2010. Springer Berlin Heidelberg.

\bibitem{SimplexAFP}
Filip Marić, Mirko Spasić, and René Thiemann.
\newblock An incremental simplex algorithm with unsatisfiable core generation.
\newblock {\em Archive of Formal Proofs}, August 2018.
\newblock \url{http://isa-afp.org/entries/Simplex.html}, Formal proof development.

\bibitem{maschler:solan:zamir:2013}
Michael Maschler, Eilon Solan, and Shmuel Zamir.
\newblock {\em Game Theory}.
\newblock Cambridge University Press, 2013.
\newblock \href {https://doi.org/10.1017/CBO9780511794216} {\path{doi:10.1017/CBO9780511794216}}.

\bibitem{Obua2009}
Steven Obua and Tobias Nipkow.
\newblock Flyspeck {II}: the basic linear programs.
\newblock {\em Annals of Mathematics and Artificial Intelligence}, 56(3):245--272, Aug 2009.
\newblock \href {https://doi.org/10.1007/s10472-009-9168-z} {\path{doi:10.1007/s10472-009-9168-z}}.

\bibitem{jpck-itp18}
Julian Parsert and Cezary Kaliszyk.
\newblock Towards formal foundations for game theory.
\newblock In Jeremy Avigad and Assia Mahboubi, editors, {\em Interactive Theorem Proving - 9th International Conference, {ITP} 2018}, volume 10895 of {\em LNCS}, pages 495--503. Springer, 2018.
\newblock \href {https://doi.org/10.1007/978-3-319-94821-8_29} {\path{doi:10.1007/978-3-319-94821-8_29}}.

\bibitem{Linear_Programming-AFP}
Julian Parsert and Cezary Kaliszyk.
\newblock Linear programming.
\newblock {\em Archive of Formal Proofs}, August 2019.
\newblock \url{https://isa-afp.org/entries/Linear_Programming.html}, Formal proof development.

\bibitem{schrijver1998theory}
Alexander Schrijver.
\newblock {\em Theory of linear and integer programming}.
\newblock John Wiley \& Sons, 1998.

\bibitem{Spasic:FormIncrSimplex}
Mirko Spasi{\'{c}} and Filip Mari{\'{c}}.
\newblock Formalization of incremental simplex algorithm by stepwise refinement.
\newblock In Dimitra Giannakopoulou and Dominique M{\'e}ry, editors, {\em FM 2012: Formal Methods}, pages 434--449, Berlin, Heidelberg, 2012. Springer Berlin Heidelberg.

\bibitem{DBLP:conf/cpp/Thiemann016}
Ren{\'{e}} Thiemann and Akihisa Yamada.
\newblock Formalizing {J}ordan normal forms in {Isabelle/HOL}.
\newblock In Jeremy Avigad and Adam Chlipala, editors, {\em Proceedings of the 5th {ACM} {SIGPLAN} Conference on Certified Programs and Proofs, Saint Petersburg, FL, USA, January 20-22, 2016}, pages 88--99. {ACM}, 2016.
\newblock URL: \url{http://dl.acm.org/citation.cfm?id=2854065}, \href {https://doi.org/10.1145/2854065.2854073} {\path{doi:10.1145/2854065.2854073}}.

\bibitem{JordanNormalFormAFP}
René Thiemann and Akihisa Yamada.
\newblock Matrices, {J}ordan normal forms, and spectral radius theory.
\newblock {\em Archive of Formal Proofs}, August 2015.
\newblock \url{http://isa-afp.org/entries/Jordan_Normal_Form.html}, Formal proof development.

\bibitem{winston2004operations}
W.L. Winston and J.B. Goldberg.
\newblock {\em Operations Research: Applications and Algorithms}.
\newblock Thomson Brooks/Cole, 2004.
\newblock URL: \url{https://books.google.at/books?id=tg5DAQAAIAAJ}.

\end{thebibliography}


\end{document}